\documentclass[a4paper,11pt]{article}

\usepackage[T1]{fontenc}
\usepackage[utf8]{inputenc}
\usepackage{fullpage}
\usepackage{times}
\usepackage{microtype}
\usepackage{amsmath,amssymb,amsthm}
\usepackage{booktabs}
\usepackage{float}
\usepackage{algorithm}
\usepackage{algpseudocode}
\usepackage{enumerate}
\usepackage[small]{caption}
\usepackage{authblk}
\usepackage{natbib}
\usepackage[hidelinks]{hyperref}
\usepackage{cleveref}
\usepackage{xcolor}

\newtheorem{theorem}{Theorem}[section]

\newtheorem{proposition}[theorem]{Proposition}
\newtheorem{lemma}[theorem]{Lemma}

\theoremstyle{definition}
\newtheorem{definition}[theorem]{Definition}
\newtheorem{example}[theorem]{Example}

\allowdisplaybreaks

\title{\bf Strengthening Proportionality in Participatory Budgeting with Additive Utilities}

\author[1]{Tzeh Yuan Neoh}
\author[2]{Nicholas Teh}
\affil[1]{Harvard University, USA}
\affil[2]{University of Oxford, UK}

\date{\vspace{-10mm}}

\begin{document}

\maketitle

\begin{abstract}
Proportional representation is a central goal in participatory budgeting, where voters select public projects subject to a shared budget. Full justified representation (FJR) accounts for groups whose members may value different projects, but computing an FJR outcome is strongly NP-hard for additive utilities. We show that FJR up to one project can be achieved in polynomial time for arbitrary nonnegative additive utilities and project costs. Our approach extends Residual-Budget Greedy and satisfies a stronger fractional axiom that can also be verified in polynomial time. For approval utilities, this axiom coincides with FJR+, providing exact FJR in that setting. We further show that every feasible completion of the algorithm’s selected set preserves the representation guarantee, providing flexibility in allocating the remaining budget. In particular, we give a completion that ensures priceability at the original budget. For unit-cost projects, we strengthen the representation guarantee to the Droop quota and combine it with priceability whenever there are enough positively valued projects to fill the budget.\end{abstract}

\section{Introduction}
\label{sec:intro}
Participatory budgeting (PB) is a process that allows residents to select public projects subject to a common budget. In New York City, for example, residents vote on improvements to schools, parks, libraries, and streets using PB. A central concern in this setting is whether sufficiently large groups are represented \emph{proportionally}.
Approval utilities, in which voters indicate only which projects they approve, capture many PB settings well and have supported a substantial literature on \emph{proportional representation}. In practice, however, projects can differ substantially in how much voters value them: a family may value a playground more than a library renovation, while another resident has the opposite preference. Additive utilities provide a richer model by allowing voters to express both which projects they value and how much they value them. They also make proportional representation more subtle. Consider a neighbourhood whose residents value different local improvements. Its proportional share of the budget might fund a set that gives every member substantial total utility, although no single project is valuable to everyone. A representation guarantee should account for such groups by considering each member's utility for the whole proposed set.

\emph{Full justified representation (FJR)}, addresses precisely this distinction. Suppose that a group can afford a set of projects with its proportional share of the budget, and that every member obtains utility at least $\beta$ from that set. FJR requires that at least one member obtain utility at least $\beta$ from the selected outcome. The requirement applies to every such group and set. In contrast, \emph{extended justified representation (EJR)} uses common lower bounds on the utility of each individual project. The difference is between the minimum of the members' total utilities and the sum of their project-by-project minimum utilities. The former can be substantially larger.

There is a computational distinction as well. The Method of Equal Shares (MES) is computable in polynomial time and its outcome satisfies EJR up to one project for additive utilities \citep{PetersPierczynskiSkowron2021}. 
FJR can be satisfied by the Greedy Cohesive Rule, but that rule searches over voter groups and proposed sets. More recently, for approval utilities and unit costs, \citet{FrankPeters2026} obtain FJR by modifying the Greedy Justified Candidate Rule, while \citet{Ai2026} give a descending-budget rule. Thus, in the approval setting, strong FJR guarantees can be achieved efficiently.

What remains unclear is whether comparable guarantees are possible for general additive utilities. Exact FJR is computationally difficult: \citet{BaharavEtAl2026} prove that finding an FJR outcome is strongly NP-hard, even for for unit costs and integer utilities bounded by a fixed constant. Under their definition, no sequential rule satisfies EJR even with unit costs and utilities in $\{0,1,2,3\}$ \citep[Theorem~6.1]{BaharavEtAl2026}; since FJR implies EJR, this also rules out FJR for that class of rules. Moreover, checking whether an arbitrary approval committee satisfies FJR is coNP-complete \citep{KalayciLiuKempe2025}.

We therefore ask whether the one-project relaxation can be extended from EJR to FJR for additive utilities while retaining polynomial-time computation and verification. We answer the computation question posed by \citet[Section~4]{FrankPeters2026} and also give polynomial-time verification. The qualification “up to one project” is the same strict one used for additive EJR: either a group member attains the required utility, or adding a project from the group's proposed set takes that member strictly above it. Thus the loss is less than the value of one of the group's own projects.

\subsection{Our Contributions}
We study PB with nonnegative additive utilities and arbitrary nonnegative project costs, including approval utilities and unit costs as special cases.
\begin{enumerate}
\item We prove that FJR up to one project can be computed in polynomial time for additive utilities and arbitrary project costs. More strongly, the outcome satisfies a fractional axiom, denoted by FJR1+, that can be checked using at most $nm$ linear programs, where $n$ and $m$ are the numbers of voters and projects. The axiom specializes exactly to FJR+ defined for the setting of PB with approval utilities \citep{Teh2026}.

\item We prove that every feasible completion of the set selected by our algorithm satisfies FJR1+. We give a completion that is priceable at the original budget: voters can fully pay for the selected projects from equal shares of the available budget, paying only for projects they value positively, and for each unselected project, the remaining budgets of voters who value it positively total at most its cost. Alternatively, adding projects until no further project fits within the budget gives an exhaustive outcome that still satisfies FJR1+. The algorithm extends the Residual-Budget Greedy (RBG) algorithm of \citet{Teh2026}.

\item For unit-cost PB with budget $k$, we strengthen the additive guarantee from the Hare quota $n/k$ to the strict Droop quota $n/(k+1)$. If at least $k$ projects are each positively valued by some voter, we compute an outcome of size $k$ satisfying both Droop-FJR1+ and priceability. We obtain priceability by completing the selected set using sequential Phragm\'en \citep{BrillEtAl2024Phragmen}, extending the Hare completion of \citet{Teh2026} to the Droop quota. On approval ballots, the outcome satisfies Droop-FJR+ and priceability.
\end{enumerate}

\paragraph{Difference in analysis for additive utilities vs. approval utilities in the PB setting.}
Extending RBG from approval utilities to additive utilities raises two challenges. 
A project can be worth more than a voter's remaining utility requirement, so its uncapped offer can exceed the voter's budget. 
Moreover, utility levels need not be small integers, and an algorithm that visits all integer levels can take exponential time. 
We address the first issue by matching the offer cap with the one-project qualification in the axiom. 
For the second, we show that to verify FJR1+ for an outcome $W$, it suffices to test for fractional violations at target utility values of the form $u_i(W)+u_i(c)$ with $c\notin W$ and $u_i(c)>0$. 
We also show that computation can skip intervals in which no project is affordable. 
A utility-weighted budget inequality then proves representation for all feasible completions.
Simply treating every positively valued project as approved is not sufficient either. 
Consider one voter, four unit-cost projects of utilities $1,1,3,3$, and budget two. Approval RBG selects the two projects of utility one if they come first in the fixed project order. The voter can instead afford the two projects of utility three, giving target utility six, whereas adding either to the selected outcome gives utility only five. Thus, this extension violates FJR1.

\subsection{Related Work}
Justified representation and EJR were introduced for approval-based committee elections by \citet{AzizEtAl2017}, and \citet{AzizLeeTalmon2018} introduced proportional representation axioms for PB with approval utilities. \citet{PetersPierczynskiSkowron2021} study the additive utilities PB model and introduce additive EJR, EJR up to one project, and FJR. We use their definition in which the additional project must belong to the group's proposed set and must take the voter strictly above the target. \citet{BrillEtAl2023PB} study proportionality for broad classes of approval-based satisfaction functions. \citet{MasarikPierczynskiSkowron2024} develop a general framework for proportionality under feasibility constraints, including a base FJR axiom for monotone utilities and extensions to weighted candidates. \citet{Skowron2026} studies additive utility proportionality under general feasibility constraints, including PB, and its PB EJR comparison uses project-by-project minimum utilities. Our FJR1 target instead uses each voter's additive utility for the whole proposed set and becomes exact FJR on approval utilities.

A recurring distinction is between finding and checking proportional outcomes. EJR committees can be found in polynomial time, although checking EJR or PJR is coNP-complete \citep{AzizEtAl2018Complexity}. \citet{BrillPeters2023} introduce EJR+, a stronger approval axiom that is polynomial-time verifiable, and \citet{KraiczyElkind2024} give a fast EJR check for outcomes returned by the Adaptive Method of Equal Shares (AMES). \citet{KalayciLiuKempe2025} prove coNP-completeness for checking FJR and introduce full proportional justified representation (FPJR). \citet{Teh2026} introduces approval FJR+ using fractional voter weights, project weights, and assignments, with a polynomial-time linear programming check. Our FJR1+ retains that approval specialization, replaces the approval counting equation by an additive utility equation, and limits the use of an unselected project when it already takes a voter strictly above the target. Our proof reduces verification to levels of the form $u_i(W)+u_i(c)$. Unlike the AMES check, our linear programs apply to an arbitrary proposed outcome and do not need the choices made by our algorithm.

The algorithm combines ideas from several price-based and greedy rules. The descending target levels come from the Greedy Justified Candidate Rule \citep{BrillPeters2023}; equal initial budgets come from MES \citep{PetersSkowron2020}, whose additive PB extension uses utility-proportional payments capped by remaining budgets \citep{PetersPierczynskiSkowron2021}; and division by a voter's remaining representation requirement is due to \citet{Ai2026}. \citet{Teh2026} develops the last idea in RBG and proves that every completion of the partial committee selected in the approval setting satisfies FJR+. We combine the capped utility-proportional payments of additive MES with RBG's division by the remaining representation requirement, and prove FJR1+ for additive utilities, arbitrary costs, and every feasible completion. \citet{FrankPeters2026} also show that an MES committee need not admit an exact-FJR completion, so the every-completion conclusion here is tied to the RBG construction and the one-project guarantee.

\citet{BrillPeters2024} study completions of committees selected by price-based rules and their welfare and representation guarantees. The sub-core was introduced for PB by \citet{MunagalaShenWang2022}; we use the cost-based approval form from \citet{Teh2026}. For indivisible public goods with additive utilities, \citet{FainMunagalaShah2018} obtain additive approximations to the core, where the additive term is measured by the largest utility of an agent for an item. Our one-project comparison uses a project in the particular set proposed by the group, concerns FJR rather than the core, and becomes exact on approval utilities. Finally, \citet{CaseyElkind2026} introduce strict Droop versions of justified representation. We extend the fractional Droop guarantee to additive utilities and combine it with priceability.

\section{Preliminaries} \label{sec:prelim}

A \emph{participatory budgeting} (PB) instance consists of a set of \emph{voters} $N=\{1,\dots,n\}$ and a set of \emph{projects} $C$, with $|C|=m$ and $n,m\ge1$.
The available \emph{budget} is $B>0$, and every project $c$ has \emph{cost} $\mathrm{cost}(c)\ge0$. For $T\subseteq C$, let $\mathrm{cost}(T)=\sum_{c\in T}\mathrm{cost}(c)$. An \emph{outcome} is a set $W\subseteq C$; it is \emph{feasible} if $\mathrm{cost}(W)\le B$. Voter $i$ has nonnegative utilities $u_i(c)$ and additive utility $u_i(T)=\sum_{c\in T}u_i(c)$. 
For the computational results, the budget, costs, and utilities are rational numbers given in binary.
Following convention in prior work, we express costs in units of a voter's equal budget share $B/n$: the normalized cost is $\rho_c= n\mathrm{cost}(c)/B$, and each voter has one unit of budget. Thus feasibility means $\sum_{c\in W}\rho_c\le n$, and $T$ is affordable with the proportional budget of $S\subseteq N$ when $\sum_{c\in T}\rho_c\le |S|$.

\emph{Approval utilities} are the special case in which $u_i(c)\in\{0,1\}$. Denoting $A_i=\{c\in C:u_i(c)=1\}$, we have $u_i(W)=|A_i\cap W|$. The \emph{unit-cost} special case has $\mathrm{cost}(c)=1$ for all $c$ and budget $B=k$, where $k\in\{1,\dots,m\}$; selecting $k$ projects is equivalent to electing a committee of size $k$. We use the terms projects and outcomes throughout, including in this special case. A \emph{completion} of $P$ is a feasible superset of $P$. It need not spend the entire budget. An outcome is \emph{exhaustive} if no further project can be added while preserving feasibility.

\begin{definition}[Extended justified representation]
\label{def:ejr}
    A feasible outcome $W$ satisfies \emph{extended justified representation} (EJR) if, for every nonempty $S\subseteq N$ and every $T\subseteq C$ with $\sum_{c\in T}\rho_c\le |S|$, there is a voter $i\in S$ such that $u_i(W)\ge \sum_{c\in T}\min_{j\in S}u_j(c)$.
\end{definition}
This is the additive EJR definition of \citet{PetersPierczynskiSkowron2021}, written using the largest common lower bound for each project's utility. In contrast, FJR uses the minimum of the members' total utilities for $T$.

\begin{definition}[Full justified representation]
\label{def:fjr}
A nonempty group $S\subseteq N$ is \emph{weakly $(\beta,T)$-cohesive} if $\beta>0$, $T\subseteq C$, and $|S|\ge\sum_{c\in T}\rho_c,
 \quad u_i(T)\ge\beta$ for all $i \in S$.
A feasible outcome $W$ satisfies \emph{full justified representation} (FJR) if each such group contains a voter $i$ with $u_i(W)\ge\beta$.
\end{definition}

Our relaxation follows the treatment of additive EJR in \citet{PetersPierczynskiSkowron2021}. There, the additional project must come from the group's proposed set and must take the voter strictly above the target. These requirements make the one-project definition coincide with exact EJR on approval utilities and rule out comparisons with an unrelated high-utility project. We retain them and replace the EJR target $\sum_{c\in T}\min_{i\in S}u_i(c)$ by the FJR target $\min_{i\in S}u_i(T)$.

\begin{definition}[FJR up to one project]
\label{def:fjr1}
    A feasible outcome $W$ satisfies \emph{FJR up to one project} (FJR1) if every weakly $(\beta,T)$-cohesive group $S$ contains a voter $i$ such that $u_i(W)\ge\beta$ or $u_i(W)+u_i(c)>\beta$ for some $c\in T\setminus W$.
\end{definition}

The extra project is used only to compare utilities; $W\cup\{c\}$ need not be feasible. Requiring $c\in T\setminus W$ is important: an unrelated project of arbitrarily large utility should not satisfy a group's representation requirement. The strict inequality is important as well. With approval utilities, FJR1 is equivalent to FJR. Indeed, if FJR is violated, take the integer level $\beta=\min_{i\in S}u_i(T)$. Every member then has integer utility at most $\beta-1$, and adding one approved project cannot give utility strictly greater than $\beta$. The reverse implication holds directly.

The definition also has a quantitative interpretation. Suppose that $S$ is weakly $(\beta,T)$-cohesive and that $u_i(c)\le\varepsilon\beta$ for all $i\in S$ and $c\in T$, where $0<\varepsilon<1$. Then FJR1 gives some $i\in S$ with $u_i(W)>(1-\varepsilon)\beta$.
Thus the approximation approaches exact FJR when the target utility comes from many individually small projects. This is a consequence of the one-project guarantee, not a separate assumption of our algorithm.

For comparison, EJR up to one project (EJR1) uses the same one-project condition at the smaller target $\sum_{c\in T}\min_{i\in S}u_i(c)$ \citep{PetersPierczynskiSkowron2021}. It is therefore implied by FJR1. The implication is strict already on approval ballots; \Cref{ex:ejr} gives a small example that is also priceable.

\paragraph{Residual-Budget Greedy for approval utilities.}
For positive costs, the approval PB version of Residual-Budget Greedy (RBG) \citep[Section~6.2]{Teh2026} starts with a selected set $P=\varnothing$ and remaining budgets $b_i=1$. At descending integer target levels $h$, a voter with $|A_i\cap P|<h$ offers $b_i/(h-|A_i\cap P|)$ to each unselected project she approves; all other offers are zero. A project is affordable when its offers total at least $\rho_c$. RBG repeatedly selects an affordable project and pays its cost within these offers, then decreases the level when none remains. It returns the selected set, which may then be completed. Zero-cost projects can be included initially; we present the additive extension in \Cref{sec:algorithm} and prove that this initialization preserves its guarantee.

\section{A Polynomial-Time Verifiable Strengthening}
\label{sec:verification}

We extend the fractional interpretation of FJR+ for approval utilities from \citet{Teh2026}. Start from the definition of weak cohesiveness (\Cref{def:fjr}). Think of replacing $S$ by a group containing weight $z_i$ of each voter $i$, and $T$ by a collection containing weight $y_c$ of each project $c$; zero weight means that the voter or project is omitted. Since each voter has one unit of normalized budget, the group's size and budget $|S|$ become $\sum_i z_i$, while the cost $\sum_{c\in T}\rho_c$ becomes $\sum_c\rho_c y_c$. Likewise, the utility $u_i(T)$ becomes the weighted sum $\sum_c u_i(c)y_c$, and the requirement $u_i(T)\ge\beta$ becomes $\sum_c u_i(c)y_c\ge\beta z_i$ for each voter with $z_i>0$. Thus both the voter's budget contribution and her utility requirement are proportional to her weight. Taking weights one on $S$ and $T$, and zero elsewhere, recovers ordinary cohesiveness. These weights are used only to compare outcomes; the voters' equal budget shares do not change.

To describe a violation, each participating voter must have $u_i(W)<\beta$. Two further restrictions distinguish projects already selected from those not selected. A selected project $c$ can contribute at most $u_i(c)\min\{z_i,y_c\}$, so utility already received is counted at the same voter weight as the target. An unselected project can contribute $u_i(c)y_c$ only if $u_i(W)+u_i(c)\le\beta$; otherwise adding it already takes the voter strictly above the target. This restriction is applied separately to each voter: a project excluded for one voter may still be counted by another.

\begin{definition}[Fractional FJR up to one project]
\label{def:plus}
For an outcome $W$ and a level $\beta>0$, a \emph{fractional violation} consists of nonnegative voter weights $z_i$ and project weights $y_c$ satisfying $\sum_{i\in N}z_i>0$, $\sum_{i\in N}z_i\ge\sum_{c\in C}\rho_c y_c$, and, for every voter $i$ with $z_i>0$, $u_i(W)<\beta$ and $\sum_{c\in W}u_i(c)\min\{z_i,y_c\}
 +\sum_{{c\notin W: u_i(W)+u_i(c)\le\beta}}u_i(c)y_c
 \ge\beta z_i$.
A feasible outcome satisfies \emph{fractional FJR up to one project} (FJR1+) if it has no fractional violation at any real level $\beta>0$.
\end{definition}

The selected projects contribute at most $u_i(W)z_i<\beta z_i$, so each participating voter must count positive utility from an unselected project. In contrast, the weight of an unselected project need not be at most $z_i$, as in approval FJR+; \Cref{ex:strictplus} illustrates this with voter weights $1/2$ and a project weight of one. The one-project restriction is also necessary: removing it can make the fractional axiom unsatisfiable, even with two voters and three unit-cost projects; see \Cref{ex:necessary}. At integer levels on approval utilities, this restriction is automatic.

\paragraph{Linear programming formulation.}
For fixed $W$ and $\beta$, introduce assignments $x_{ic}$ indicating how much of project $c$ is counted toward voter $i$'s utility. Minimize $\sum_{c\in C}\rho_c y_c$ over nonnegative $z_i,y_c,x_{ic}$, subject to
\begin{equation}
 \sum_{i\in N}z_i=1
 \label{eq:normalize}
\end{equation}
and
\begin{subequations}\label{eq:assignments}
\begin{alignat}{2}
 z_i&=0 &\quad&\text{if }u_i(W)\ge\beta, \label{eq:active}\\
 x_{ic}&=0 &\quad&\text{if }u_i(c)=0, \label{eq:zero}\\
 \sum_{c\in C}u_i(c)x_{ic}&=\beta z_i &\quad&\text{for every }i, \label{eq:utility}\\
 x_{ic}&\le y_c &\quad&\text{for every }i,c, \label{eq:project}\\
 x_{ic}&\le z_i &\quad&\text{if }c\in W, \label{eq:selected}\\
 x_{ic}&=0 &\quad&\text{if }c\notin W\text{ and }u_i(W)+u_i(c)>\beta. \label{eq:oneproject}
\end{alignat}
\end{subequations}
Since projects are public goods, \eqref{eq:project} bounds each voter's assignment separately, not the sum of the assignments. A project's cost is counted only once.

These constraints give an equivalent test for a fractional violation. In one direction, the assignment bounds and \eqref{eq:utility} imply the utility inequality in \Cref{def:plus}. Conversely, for each voter with $z_i>0$, set $x_{ic}=\min\{z_i,y_c\}$ on positively valued selected projects and $x_{ic}=y_c$ on positively valued unselected projects with $u_i(W)+u_i(c)\le\beta$, setting all other assignments to zero. By the utility inequality, the assigned utility is at least $\beta z_i>0$. Decrease that voter's assignments proportionally until it equals $\beta z_i$; all upper bounds remain valid. Set every assignment of a voter with $z_i=0$ to zero. This proves equivalence with the weight conditions in \Cref{def:plus} and \eqref{eq:assignments}.

A common positive rescaling of all weights and assignments preserves these conditions, so dividing them by $\sum_i z_i$ gives \eqref{eq:normalize}. Thus a violation at $\beta$ exists exactly when the linear program is feasible and its optimum is at most one. The same rescaling also shows that restricting all weights to $[0,1]$, without imposing \eqref{eq:normalize}, would not change the axiom. We use the equivalent assignment formulation in the proofs below.

\begin{lemma}
\label{lem:levels}
If nonnegative weights and assignments satisfy \eqref{eq:assignments} at a level $\beta>0$, with $\sum_i z_i>0$, then the assignments can be changed so that the same constraints hold at a level in
\begin{equation}
 \{u_i(W)+u_i(c):i\in N,\ c\notin W,\ u_i(c)>0\}.
 \label{eq:levels}
\end{equation}
The voter and project weights are unchanged.
\end{lemma}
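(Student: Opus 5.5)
The plan is to \emph{lower} the level rather than raise it. Lowering $\beta$ only makes the utility equation \eqref{eq:utility} easier to meet, since it can be done by scaling assignments down. Among the constraints in \eqref{eq:assignments}, only \eqref{eq:active}, \eqref{eq:utility} and \eqref{eq:oneproject} depend on $\beta$. Let $L$ denote the set in \eqref{eq:levels}. Define
\[
\beta'' = \max\{\ell\in L:\ell\le\beta\}.
\]
I will keep all $z_i$ and $y_c$ fixed and replace each $x_{ic}$ by $(\beta''/\beta)\,x_{ic}$. Two things remain to check: that this maximum exists, and that every constraint survives the change.

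\textbf{Step 1 (existence of $\beta''$ and activity).} Fix a voter $i$ with $z_i>0$; one exists because $\sum_i z_i>0$. By \eqref{eq:active} we have $u_i(W)<\beta$. By \eqref{eq:selected}, the selected projects contribute at most $\sum_{c\in W}u_i(c)z_i=u_i(W)z_i<\beta z_i$ to the left side of \eqref{eq:utility}. Hence some $c\notin W$ has $x_{ic}>0$. By \eqref{eq:zero} this $c$ has $u_i(c)>0$, and by \eqref{eq:oneproject} it satisfies $u_i(W)+u_i(c)\le\beta$. So $\ell_i:=u_i(W)+u_i(c)$ belongs to $L$ and satisfies $\ell_i\le\beta$. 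Since $L$ is finite, $\beta''$ exists, and $\beta''\ge\ell_i>u_i(W)\ge0$. This gives both $\beta''>0$ and $u_i(W)<\beta''$ for every voter with $z_i>0$, so \eqref{eq:active} holds at $\beta''$.

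\textbf{Step 2 (the remaining constraints).} Scaling by $\beta''/\beta\le1$ turns \eqref{eq:utility} into $\sum_c u_i(c)x_{ic}=\beta''z_i$. It preserves \eqref{eq:zero}, and it preserves the upper bounds \eqref{eq:project} and \eqref{eq:selected} because the factor is at most one.

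The key point is \eqref{eq:oneproject} at $\beta''$. Consider any pair with $c\notin W$ and $x_{ic}>0$. By \eqref{eq:zero}, $u_i(c)>0$, so $u_i(W)+u_i(c)\in L$. By the old \eqref{eq:oneproject}, this value is at most $\beta$. By maximality of $\beta''$, it is then at most $\beta''$. So no positive assignment is newly forbidden. Equivalently, no element of $L$ lies in $(\beta'',\beta]$, so the set of permitted positively valued pairs is identical at $\beta$ and $\beta''$.

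I expect the main obstacle to be choosing the direction of the move. Moving up to the next level would enlarge the permitted pairs but demand more utility, and that extra utility may not be available without changing the weights. The downward choice, combined with the observation in Step 1 that every active voter certifies a level in $L$ lying strictly above $u_i(W)$, resolves both issues at once.
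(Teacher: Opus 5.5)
Your proof is correct and takes essentially the same route as the paper. You keep $z$ and $y$ fixed, scale every assignment by the ratio of a level in \eqref{eq:levels} to $\beta$, and use the fact that each voter with $z_i>0$ has a positive unselected assignment to obtain \eqref{eq:active} at the new level. The only difference is the level chosen: the paper takes the largest $u_i(W)+u_i(c)$ over pairs with $c\notin W$ and $x_{ic}>0$, while you take the largest element of the set not exceeding $\beta$ (which is at least the paper's level), and both choices preserve \eqref{eq:oneproject} for the same reason.
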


\begin{proof}
Fix such weights and assignments at $\beta$. Every voter with $z_i>0$ has a positive assignment to some unselected project. Otherwise, \eqref{eq:selected} would give
\[
 \beta z_i=\sum_{c\in W}u_i(c)x_{ic}
 \le u_i(W)z_i<\beta z_i.
\]
Let $h=\max\{u_i(W)+u_i(c):i\in N,\ c\notin W,\ x_{ic}>0\}$.
The maximum is well defined; \eqref{eq:zero} implies that all projects appearing in it have positive utility for their assigned voter. By \eqref{eq:oneproject}, $h\le\beta$. For every voter with $z_i>0$, the existence of an unselected positive assignment gives $u_i(W)<h$.

Keep $z_i,y_c$ unchanged and multiply all assignments by $h/\beta$. The utility equations now hold at level $h$, all upper bounds remain valid, and every positive unselected assignment satisfies $u_i(W)+u_i(c)\le h$. Thus \eqref{eq:assignments} holds at level $h$, with the same voter and project weights.
\end{proof}

\begin{proposition}
\label{prop:relations}
For additive utilities, FJR1+ implies FJR1. On approval utilities, it coincides with approval FJR+.
\end{proposition}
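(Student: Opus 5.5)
The plan is to prove the two claims separately. The first is a direct embedding of an integral violation into a fractional one. The second combines \Cref{lem:levels} with the integrality of approval utilities.

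\emph{FJR1+ implies FJR1.} I argue by contrapositive. Suppose $W$ is feasible and violates FJR1. Then there is a weakly $(\beta,T)$-cohesive group $S$ such that every $i\in S$ has $u_i(W)<\beta$ and $u_i(W)+u_i(c)\le\beta$ for all $c\in T\setminus W$. Set $z_i=1$ for $i\in S$ and $y_c=1$ for $c\in T$, with all other weights zero. Then $\sum_i z_i=|S|\ge\sum_{c\in T}\rho_c=\sum_c\rho_c y_c$. For each $i\in S$, the selected part of the sum in \Cref{def:plus} is at least $\sum_{c\in T\cap W}u_i(c)\min\{1,1\}$. Every $c\in T\setminus W$ passes the one-project filter, so it contributes $u_i(c)$. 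Together these give at least $u_i(T)\ge\beta=\beta z_i$. This is a fractional violation at level $\beta$, so $W$ is not FJR1+.

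\emph{Approval case.} Let $u_i(c)\in\{0,1\}$. By \Cref{lem:levels}, FJR1+ only needs to be tested at levels $u_i(W)+1$, which are positive integers. At an integer level $\beta$, any participating voter has $u_i(W)\le\beta-1$, so $u_i(W)+u_i(c)\le\beta$ holds for every approved $c$. Hence constraint \eqref{eq:oneproject} is vacuous, and the utility condition reads
\[
\sum_{c\in A_i\cap W}\min\{z_i,y_c\}+\sum_{c\in A_i\setminus W}y_c\ge\beta z_i .
\]
Conversely, approval FJR+ \citep{Teh2026} is stated at integer levels $\ell$ with $|A_i\cap W|<\ell$. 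I would unfold that definition, in its assignment/LP form, and check that its constraints are exactly \eqref{eq:assignments} specialized to $0/1$ utilities at integer $\beta$. These constraints are the voter normalization, $x_{ic}\le y_c$, $x_{ic}\le z_i$ on selected projects, and the counting equation $\sum_{c\in A_i}x_{ic}=\ell z_i$. Once that match is confirmed, both directions follow. A violation of FJR+ at an integer $\ell$ is an FJR1+ violation at $\beta=\ell$. An FJR1+ violation at any real $\beta$ moves by \Cref{lem:levels} to an integer level, with the same weights, where it is an FJR+ violation.

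\emph{Main obstacle.} The step I expect to be delicate is confirming that approval FJR+ caps assignments to unselected projects only by $y_c$, not by $\min\{z_i,y_c\}$. The remark after \Cref{def:plus} suggests this, but I need to check it against Teh's definition. If Teh's formulation did impose $x_{ic}\le z_i$ on unselected projects as well, I would first try to show that this is without loss. The idea is to split any project weight $y_c$ exceeding the relevant voter weights across copies, or equivalently to rescale voter weights upward by a common factor. Both constraints are homogeneous, so this does not increase the cost relative to $\sum_i z_i$, and I would use this normalization to recover the same family of violations.
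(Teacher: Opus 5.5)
Your proposal is correct and follows the paper's proof essentially step for step. For the first claim you use $0/1$ weights on $S$ and $T$. For approval utilities you use \Cref{lem:levels} to move to integer levels $u_i(W)+1$, where \eqref{eq:oneproject} is vacuous and the remaining constraints are exactly those of Teh's FJR+.

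The obstacle you anticipate does not arise, because approval FJR+ likewise bounds unselected assignments only by $y_c$ (see the remark after \Cref{def:plus} and \Cref{ex:strictplus}). Your fallback would not have rescued a capped version in any case: scaling voter weights alone breaks \eqref{eq:utility}, and projects cannot be split into copies.
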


\begin{proof}
\emph{The ordinary guarantee.}
Suppose that $W$ violates FJR1 for a weakly $(\beta,T)$-cohesive group $S$. Set $z_i=1$ for $i\in S$ and zero otherwise, and $y_c=1$ for $c\in T$ and zero otherwise. Each member has $u_i(W)<\beta$ and $u_i(W)+u_i(c)\le\beta$ for every $c\in T\setminus W$. Thus the left-hand side of the utility inequality in \Cref{def:plus} equals $u_i(T)\ge\beta$. Since $S$ is nonempty and can afford $T$, the weight conditions also hold, so these weights form a fractional violation, a contradiction.

\emph{Approval utilities.}
By \Cref{lem:levels}, every fractional violation has a level in \eqref{eq:levels}. All these levels are integers between $1$ and $m$, since an unselected approved project gives $u_i(W)+1\le |W|+1\le m$. At an integer level $h$, a participating approval voter has $u_i(W)\le h-1$, so \eqref{eq:oneproject} imposes no additional restriction. The remaining constraints are exactly those of FJR+ for PB with approval utilities \citep[Definition~6.1]{Teh2026}, using the same inequalities also for zero-cost projects. Conversely, every approval FJR+ violation at an integer level satisfies \eqref{eq:oneproject}.
\end{proof}

Then, our main result is as follows.

\begin{theorem}
\label{thm:verify}
FJR1+ can be verified in polynomial time using at most $n(m-|W|)$ linear programs, each with $O(nm)$ variables and constraints.
\end{theorem}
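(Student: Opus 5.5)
The verification reduces to a finite list of candidate levels and then runs one linear program per level. By \Cref{lem:levels}, if $W$ has a fractional violation at some level $\beta>0$, then there is one at a level in the set \eqref{eq:levels}, with the same voter and project weights. The weight condition $\sum_i z_i\ge\sum_c\rho_c y_c$ is therefore preserved, and so is the strict inequality $u_i(W)<h$ for participating voters, which the proof of the lemma establishes. Conversely, a violation at a level in \eqref{eq:levels} is in particular a violation at a positive real level, because each such level is at least $u_i(c)>0$. Hence $W$ satisfies FJR1+ if and only if no level in \eqref{eq:levels} admits a violation. This set has at most $n(m-|W|)$ distinct values, one for each pair $(i,c)$ with $c\notin W$, and each value is computed in polynomial time from the rational input.

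For each candidate level $\beta$, we solve the linear program that minimizes $\sum_c\rho_c y_c$ subject to \eqref{eq:normalize} and \eqref{eq:assignments}. The paragraph preceding \Cref{lem:levels} already shows that a violation at $\beta$ exists exactly when this program is feasible with optimum at most one. That argument combines the equivalence between the weight conditions of \Cref{def:plus} and the assignment constraints with the rescaling that normalizes $\sum_i z_i=1$.

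We must check one point against \Cref{def:plus}: constraint \eqref{eq:active} is meant to force $u_i(W)<\beta$ for every voter with $z_i>0$. As written, it is an equality only when $u_i(W)\ge\beta$, so the strict inequality is encoded correctly. The condition is decided per voter before the program is built, so no strict inequality ever appears inside the LP.

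The program has $n+m+nm$ variables, namely the $z_i$, the $y_c$ and the $x_{ic}$, and $O(nm)$ constraints. Its coefficients are the input rationals together with $\beta$, whose bit length is polynomial in the input. Linear programming is solvable in polynomial time by the ellipsoid or interior-point method, and either method also detects infeasibility. Running the programs over all candidate levels therefore decides FJR1+ in polynomial time, using at most $n(m-|W|)$ programs.

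Most of the work is already contained in \Cref{lem:levels}, so the main obstacle is bookkeeping in the reduction. We must confirm that restricting to the finite level set loses no violation and that the LP's equality and zero constraints reproduce the strict and one-project conditions of \Cref{def:plus} exactly. If some $\beta$ in \eqref{eq:levels} makes the LP infeasible, for example because every voter has $u_i(W)\ge\beta$, that level is simply reported as violation-free.
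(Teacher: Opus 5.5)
Your proposal is correct and follows essentially the same route as the paper: reduce to the finitely many levels in \eqref{eq:levels} via \Cref{lem:levels}, then solve the normalized linear program at each level and test whether it is feasible with optimum at most one. The only steps the paper states that you leave out are the preliminary check that $W$ itself is feasible, which is needed because FJR1+ is defined only for feasible outcomes, and the remark that the nonnegative objective guarantees an attained optimum even when some costs are zero.
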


\begin{proof}
First check feasibility of $W$. For each level $h$ in \eqref{eq:levels}, solve the linear program above with $\beta=h$. By the assignment formulation and \Cref{lem:levels}, a violation exists exactly when at least one program is feasible and has optimum at most one. The objective is nonnegative, so every feasible program has a finite attained minimum, even if some project costs are zero. There are at most $n(m-|W|)$ levels, and each program has $O(nm)$ variables and constraints. Hence verification takes polynomial time. Normalizing voter weight rather than project cost also covers violations involving only zero-cost projects.
\end{proof}

The verification theorem concerns the fractional axiom, not an exact test for ordinary FJR1. On approval utilities, FJR1+ implies FJR, but an outcome satisfying FJR need not satisfy FJR1+. This is the strict strengthening already established for approval FJR+ by \citet{Teh2026}; \Cref{ex:strictplus} illustrates that existing distinction. The verification procedure applies to an arbitrary proposed outcome, without needing the sequence of choices made by the algorithm.

\section{Efficient Computation of the Stronger Guarantee}
\label{sec:algorithm}

We extend the approval RBG rule of \cite{Teh2026} to the setting with additive utilities. As in additive MES, the offer is proportional to the voter's utility for the project until it reaches her remaining budget \citep{PetersPierczynskiSkowron2021}; here the proportionality factor also depends on her remaining utility requirement at the current target level.

By rescaling utilities and target levels by a common positive factor, we may assume that utilities are integers; this preserves FJR1 and FJR1+. 
Let $U=\max_i u_i(C)$. Initialize $P=\{c\in C:\rho_c=0\}$ and budgets $b_i=1$, assigning zero payments to these projects. If $U=0$, return this set. Zero-cost projects therefore contribute to $u_i(P)$ from the start, without reducing any budget.

The algorithm processes utility levels $h=U,U-1,\dots,1$. At level $h$, a voter below the level offers
\begin{equation}
 o_{ic}(h)=
 \begin{cases}
 \displaystyle\min\left\{b_i,\frac{b_i u_i(c)}{h-u_i(P)}\right\},&u_i(P)<h,\\[1ex]
 0,&u_i(P)\ge h
 \end{cases}
 \quad \text{for } c\notin P.
 \label{eq:offers}
\end{equation}
The dependence on the current $P$ and $b$ is implicit. The ratio $u_i(c)/(h-u_i(P))$ is the fraction of the voter's remaining utility requirement provided by $c$. The voter offers the same fraction of her remaining budget, but never more than that budget. Voters already at the level do not contribute.

A project is affordable at level $h$ when $\sum_i o_{ic}(h)\ge\rho_c$. The algorithm repeatedly buys an affordable project, charging payments within the offers, until no project is affordable at that level. Fix project and voter orders to make all choices deterministic.

\begin{algorithm}[ht]
\caption{Residual-Budget Greedy for additive utilities}
\label{alg:rbg}
\begin{algorithmic}[1]
\Require Integer utilities, $U=\max_i u_i(C)$, normalized costs $\rho_c\ge0$, fixed project and voter orders
\State $P\gets\{c\in C:\rho_c=0\}$; $b_i\gets1$ for each $i\in N$; assign zero payments to $P$
\For{$h=U,U-1,\dots,1$}
 \While{some $c\notin P$ satisfies $\sum_i o_{ic}(h)\ge\rho_c$}
  \State Choose the first such project $c$; set $r\gets\rho_c$
  \For{each voter $i$ in the fixed order}
   \State $p_i\gets\min\{o_{ic}(h),r\}$; $r\gets r-p_i$
  \EndFor
  \State $b_i\gets b_i-p_i$ for each $i$; $P\gets P\cup\{c\}$
 \EndWhile
\EndFor
\State \Return $P$ and the payments
\end{algorithmic}
\end{algorithm}

\begin{theorem}
\label{thm:compute}
For additive utilities and arbitrary project costs, \Cref{alg:rbg} has a polynomial-time implementation. It returns a feasible set $P$ such that every feasible completion of $P$ satisfies FJR1+.
\end{theorem}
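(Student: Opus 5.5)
The proof has three parts: the algorithm is well defined and its output is feasible; it runs in polynomial time despite $U$ being exponential in the input size; and the representation guarantee holds for every completion. \Cref{lem:levels} reduces the last part to a single integer level.

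\emph{Feasibility.} Each payment satisfies $p_i\le o_{ic}(h)\le b_i$, so budgets stay nonnegative. Each purchased project is paid exactly $\rho_c$, because the offers sum to at least $\rho_c$ and the greedy charging loop stops once $r=0$. Zero-cost projects are free. Hence $\sum_{c\in P}\rho_c\le\sum_i(1-b_i)\le n$.

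\emph{Running time.} Each iteration of the while loop adds a project, so there are at most $m$ purchases. Between purchases $P$ and $b$ are fixed, and each $o_{ic}(h)$ is nonincreasing in $h$, since its denominator $h-u_i(P)$ grows with $h$. Therefore the set of levels at which a given project is affordable is downward closed among the levels below the current one. Instead of stepping through $h$ one at a time, I would binary search, for each $c\notin P$, for the largest integer $h'\le h$ at which $c$ is affordable. This takes $O(\log U)$ evaluations, which is polynomial in the bit length. The algorithm then jumps to the maximum of these values over $c$, or stops if none exists. It must also be checked that the rational numbers $b_i$ keep polynomial bit size. I would argue this through the explicit form of the payments, or accept arithmetic on numbers whose size is bounded by the number of purchases.

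\emph{FJR1+ for every completion.} Let $W\supseteq P$ be feasible and suppose there is a fractional violation. By \Cref{lem:levels} the same weights give a violation at some $\beta=u_j(W)+u_j(c)$, which is an integer with $1\le\beta\le U$. Let $P_\beta$ be the selected set at the end of the algorithm's level-$\beta$ phase, so $P_\beta\subseteq P\subseteq W$, and let $b$ be the budgets at that moment. Two facts hold at that moment.

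First, every voter $i$ with $z_i>0$ satisfies $u_i(P_\beta)\le u_i(W)<\beta$. For every $c\notin P_\beta$ with $x_{ic}>0$, we have $u_i(P_\beta)+u_i(c)\le\beta$. This holds directly from \eqref{eq:oneproject} if $c\notin W$, and because $u_i(P_\beta)+u_i(c)\le u_i(W)<\beta$ if $c\in W\setminus P_\beta$. So the offer is uncapped, $o_{ic}(\beta)=b_iu_i(c)/(\beta-u_i(P_\beta))$. Summing the non-affordability inequalities $\sum_io_{ic}(\beta)<\rho_c$, weighted by $y_c$, and using $x_{ic}\le y_c$ together with the utility equation gives
\[
\sum_{c\notin P_\beta}\rho_cy_c\;\ge\;\sum_i\frac{b_i}{\beta-u_i(P_\beta)}\sum_{c\notin P_\beta}u_i(c)x_{ic}\;\ge\;\sum_i b_iz_i,
\]
where the last step uses $x_{ic}\le z_i$ on $P_\beta\subseteq W$. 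The first inequality is strict whenever some $y_c>0$ with $c\notin P_\beta$.

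Second, a budget invariant. Each payment at a level $h'\ge\beta$ multiplies the voter's budget by at least $(h'-a-u)/(h'-a)\ge(\beta-a-u)/(\beta-a)$, where $a=u_i(P)$ and $u=u_i(c)$ at the time of that purchase. Telescoping the product gives $b_i\ge(\beta-u_i(P_\beta))/\beta$ for active voters. A capped payment drives $b_i$ to $0$ only when the voter already reaches $h'\ge\beta$, in which case she is inactive at $\beta$.

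These two facts alone yield only $\sum_{c\notin P_\beta}\rho_cy_c\ge\sum_iz_i(\beta-u_i(P_\beta))/\beta$. The missing part must come from the projects of $P_\beta$ counted in the objective, $\sum_{c\in P_\beta}\rho_cy_c$. These I would bound below by the voters' payments, showing $\sum_{c\in P_\beta}\rho_cy_c\ge\sum_iz_i(1-b_i)$ in a utility-weighted form. I would use that voter $i$ paid at most $b\,u_i(c)/(h'-u_i(P))$ for $c$, and compare this payment with the counted amount $u_i(c)\min\{z_i,y_c\}$ through the same telescoping.

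This utility-weighted budget inequality is the main obstacle. To close it, I would first try to strengthen the invariant to a weighted statement of the form $1-b_i\le\sum_{c\in P_\beta}u_i(c)\pi_{ic}/\beta$ for suitable per-project shares $\pi_{ic}$, and then combine it with the displayed bound. The goal is to reach $\sum_c\rho_cy_c>\sum_iz_i$, which contradicts the violation condition $\sum_iz_i\ge\sum_c\rho_cy_c$.
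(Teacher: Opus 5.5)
\textbf{Gap in the representation argument.} Your overall plan has the right shape: derive a contradiction from non-affordability at the end of level $\beta$, using an inequality maintained by purchases at levels $h\ge\beta$. Your feasibility argument and both of your ``facts'' are correct. But the proof stops at the step that carries all the weight, and the route you propose for that step does not work as written.

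In the first fact you weaken $\sum_i b_it_i/a_i$ to $\sum_i b_iz_i$, where $a_i=\beta-u_i(P_\beta)$ and $t_i=\sum_{c\notin P_\beta}u_i(c)x_{ic}\ge a_iz_i$. This throws away exactly the slack you later need. The remaining inequality $\sum_{c\in P_\beta}\rho_cy_c\ge\sum_iz_i(1-b_i)$ does not follow from the payment rule and the assignment constraints. Voters may have paid for projects of $P_\beta$ that carry no project weight.

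Here is a small example. Take one voter and projects $d,e$ with $u_1(d)=u_1(e)=1$, $\rho_d=1/2$, $\rho_e=1$. The algorithm buys $d$ at level $2$ and then nothing else. For $W=\{d\}$ and $\beta=2$, the weights $z_1=1$, $y_d=0$, $y_e=x_{1e}=2$ satisfy all of \eqref{eq:assignments}. Yet the left side is $0$, while $\sum_iz_i(1-b_i)=\sum_iz_iu_i(P_\beta)/\beta=1/2$.

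So the cure is not a better bound on $\sum_{c\in P_\beta}\rho_cy_c$; it is to stop weakening $t_i$. When $y_c=0$ on $P_\beta$, you have $t_i=\beta z_i$, and your second fact already closes the argument. When $x_{ic}=z_i$ on $P_\beta$, payments can be charged to $\rho_cy_c$ directly. The general case needs one invariant that interpolates between these.

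The missing idea is to keep the adaptive weight $t_i/a_i$, recomputed from the current $P$, and show that $\Phi(P,b)=\sum_{i:z_i>0}b_it_i/a_i-\sum_{c\notin P}\rho_cy_c$ never decreases during purchases at levels $h\ge\beta$. It starts at least $\sum_iz_i-\sum_c\rho_cy_c\ge0$, since zero-cost projects are free and $t_i\ge a_iz_i$. For a purchase of $d\in W\setminus P$ with payments $p_i\le b_iu_i(d)/(h-u_i(P))\le b_iu_i(d)/a_i$, one checks
\[
\frac{(b_i-p_i)(t_i-u_i(d)x_{id})}{a_i-u_i(d)}-\frac{b_it_i}{a_i}+p_ix_{id}
=\frac{(b_iu_i(d)-a_ip_i)(t_i-a_ix_{id})}{a_i(a_i-u_i(d))}\ge0.
\]
Hence the first sum drops by at most $\sum_ip_ix_{id}\le\rho_dy_d$, which is exactly the drop of the second sum. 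At the end of level $\beta$ this gives $\sum_ib_it_i/a_i\ge\sum_{c\notin P_\beta}\rho_cy_c$. That contradicts your first fact before the weakening, which is strict because $t_i>0$ forces $y_c>0$ for some $c\notin P_\beta$.

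\textbf{Error in the running-time argument.} The offer $o_{ic}(h)$ is nonincreasing in $h$ only while $u_i(P)<h$; once $h$ falls to $u_i(P)$, the offer drops to zero. So affordability is not downward closed. With one voter, $\rho_c=b_1=1/2$ and $u_1(P)=u_1(c)=1$, the project is affordable at $h=2$ but not at $h=1$. A binary search over all of $[1,h]$ can therefore miss affordable levels and deviate from the algorithm.

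You need to split the remaining levels at the points $u_i(P)+1$ into at most $n+1$ intervals on which the set of contributing voters is fixed, and binary-search only inside an interval. The bit-length claim also needs an actual argument. For example, each purchase multiplies a common denominator of the budgets only by the integers $h-u_i(P)\le U$ of the contributing voters.
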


The proof adapts the remaining-budget comparison in \citet[Theorem~6.4]{Teh2026}. Additive utilities require the remaining assigned utility $t_i$ in addition to the remaining utility requirement $a_i$. The one-project condition and the budget cap make this comparison valid when one project can exceed the remaining utility requirement. For a supposed fractional violation, the ratio $t_i/a_i$ is at least the voter's weight. Multiplying this ratio by the voter's remaining budget gives an amount that can be summed across the group. A purchase made at or above the target level cannot reduce this sum by more than the corresponding fractional project cost. At the target level, a violation would therefore require some project to remain affordable.

For the next two lemmas, fix an outcome $W$, a level $\beta>0$, and nonnegative weights and assignments satisfying \eqref{eq:assignments}, with $\sum_i z_i>0$. We consider a current selected set $P\subseteq W$ and nonnegative remaining budgets $b_i$.

For every voter with $z_i>0$, define her remaining utility requirement and remaining assigned utility by $a_i=\beta-u_i(P)$ and $t_i=\sum_{c\notin P}u_i(c)x_{ic}$.
These quantities change with $P$. Since $u_i(W)<\beta$ and $P\subseteq W$, we have $a_i>0$. Moreover,
\begin{equation}
 t_i=\beta z_i-\sum_{c\in P}u_i(c)x_{ic}
 \ge (\beta-u_i(P))z_i=a_i z_i,
 \label{eq:remaining}
\end{equation}
where the equality follows from \eqref{eq:utility} and the inequality uses \eqref{eq:selected}. Consider
\begin{equation}
 \Phi(P,b)=\sum_{i:z_i>0}\frac{b_i t_i}{a_i}
          -\sum_{c\notin P}\rho_c y_c.
 \label{eq:potential}
\end{equation}
\begin{lemma}[Remaining-budget comparison]
\label{lem:budget}
Initially, $\Phi(P,\mathbf{1})\ge\sum_i z_i-\sum_{c\in C}\rho_c y_c$. Every purchase at a level $h\ge\beta$ of a project in $W\setminus P$ leaves $\Phi$ unchanged or increases it.
\end{lemma}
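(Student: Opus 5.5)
The plan is to verify the two claims directly from the definitions of $a_i$, $t_i$ and $\Phi$. Throughout, $P\subseteq W$, and every voter with $z_i>0$ has $u_i(W)<\beta$.

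\emph{Initial bound.} Initially $b_i=1$ and $P=\{c:\rho_c=0\}$. By \eqref{eq:remaining}, each participating voter satisfies $t_i/a_i\ge z_i$. The projects in the initial $P$ have $\rho_c=0$, so
\[
\sum_{c\notin P}\rho_c y_c=\sum_{c\in C}\rho_c y_c .
\]
Summing the first inequality over voters with $z_i>0$ and subtracting this cost term gives the first claim.

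\emph{Purchase step: reduction to a per-voter inequality.} Let $c\in W\setminus P$ be bought at level $h\ge\beta$ with payments $p_i\le o_{ic}(h)$, where $\sum_i p_i=\rho_c$. Removing $c$ from the cost sum raises $\Phi$ by $\rho_c y_c=\sum_i p_i y_c$. It therefore suffices to show, for every voter, that her term decreases by at most $p_i y_c$. Voters with $z_i=0$ contribute no term, so for them this is trivial.

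\emph{Purchase step: a participating voter.} Fix $i$ with $z_i>0$, and write $u=u_i(c)$, $x=x_{ic}$, $a=a_i$, $t=t_i$, $b=b_i$, $p=p_i$ and $r=t/a$. After the purchase the quantities become $b-p$, $a-u$ and $t-ux$.

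First, $a-u>0$, because $P\cup\{c\}\subseteq W$ gives $u_i(P\cup\{c\})\le u_i(W)<\beta$.

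Next, expand the new ratio as
\[
\frac{t-ux}{a-u}=r+\frac{u(r-x)}{a-u}.
\]
With this, the decrease of the voter's term is
\[
\frac{bt}{a}-(b-p)\frac{t-ux}{a-u}
 = pr-(b-p)\frac{u(r-x)}{a-u}.
\]

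We want this to be at most $px\le py_c$, using \eqref{eq:project}. The desired bound is equivalent to $(r-x)\bigl(p(a-u)-(b-p)u\bigr)\le 0$, that is, to $(r-x)(pa-bu)\le0$. Two facts give this.

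1. $r\ge x$. Since $c\in W$, constraint \eqref{eq:selected} gives $x\le z_i$, and \eqref{eq:remaining} gives $z_i\le r$.
2. $pa\le bu$. This is the main obstacle, and it is exactly where the offer cap and the condition $h\ge\beta$ are used.
   1. If $u_i(P)\ge h$, the offer is zero, so $p=0$.
   2. Otherwise \eqref{eq:offers} gives $p\le bu/(h-u_i(P))$. Since $h\ge\beta$, we have $h-u_i(P)\ge a$, so $p\le bu/a$.
   3. This covers $u=0$ as well: then $p=0$ and also $x=0$ by \eqref{eq:zero}.

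Combining the two facts, each participating voter's term decreases by at most $px_{ic}\le p_iy_c$. Summing over voters, the total decrease is at most $\rho_c y_c$, which is exactly offset by the removed cost term. Hence $\Phi$ does not decrease.
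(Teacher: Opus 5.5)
Your proposal is correct and follows essentially the same route as the paper. The paper packages the per-voter step as one algebraic identity whose numerator factors as $(b_iu_i(d)-a_ip_i)(t_i-a_ix_{id})$, which are exactly your two facts $pa\le bu$ and $r\ge x$. It then sums using $x_{id}\le y_d$ and $\sum_i p_i=\rho_d$, just as you do.
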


\begin{proof}
Initially, $b_i=1$ and all projects in $P$ have zero cost. Thus \eqref{eq:remaining} gives
\[
 \Phi(P,\mathbf{1})\ge\sum_i z_i-\sum_{c\in C}\rho_c y_c.
\]
When all costs are positive, $P=\varnothing$, so \eqref{eq:utility} gives $t_i=\sum_{c\in C}u_i(c)x_{ic}=\beta z_i$ and $a_i=\beta$.

Suppose project $d\in W\setminus P$ is purchased at level $h\ge\beta$ with payments $p_i$. For each voter with $z_i>0$, the fact that $d\in W\setminus P$ gives us
\[
 a_i>u_i(d),\quad x_{id}\le z_i,\quad
 p_i\le\frac{b_i u_i(d)}{h-u_i(P)}
      \le\frac{b_i u_i(d)}{a_i}.
\]
After the purchase, $a_i,t_i,b_i$ become $a_i-u_i(d)$, $t_i-u_i(d)x_{id}$, and $b_i-p_i$, respectively. Direct expansion gives
\begin{equation}
\begin{aligned}
 \frac{(b_i-p_i)(t_i-u_i(d)x_{id})}{a_i-u_i(d)}
       -\frac{b_i t_i}{a_i}+p_i x_{id} =
 \frac{(b_i u_i(d)-a_i p_i)(t_i-a_i x_{id})}
      {a_i(a_i-u_i(d))}\ge0.
\end{aligned}
\label{eq:identity}
\end{equation}
The denominator is positive. The first factor in the numerator is nonnegative by the payment bound, and the second by \eqref{eq:remaining} and $x_{id}\le z_i$.

Thus the first sum in \eqref{eq:potential} decreases by at most $\sum_i p_i x_{id}$. Assignments of voters with $z_i=0$ are zero by \eqref{eq:zero} and \eqref{eq:utility}, so including these voters in the sum changes nothing. Since $x_{id}\le y_d$ and $\sum_{i\in N}p_i=\rho_d$, $\sum_i p_i x_{id}\le\rho_d y_d$.
The second sum in \eqref{eq:potential} decreases by exactly $\rho_d y_d$. Hence $\Phi$ does not decrease during any purchase at a level at least $\beta$.
\end{proof}

\begin{lemma}[Affordability at the target level]
\label{lem:affordability}
If $P$ contains all zero-cost projects and no project outside $P$ is affordable at level $\beta$, then $\Phi(P,b)<0$.
\end{lemma}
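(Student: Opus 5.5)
The plan is to compare the first sum in $\Phi(P,b)$ directly with the total offers at level $\beta$. I will rewrite that sum as a sum, over unselected projects, of assignment-weighted offers. Non-affordability then bounds it strictly by $\sum_{c\notin P}\rho_c y_c$.

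\emph{Step 1: the offer cap is inactive on assigned projects.} Fix a voter $i$ with $z_i>0$. Since $a_i=\beta-u_i(P)>0$, her offer at level $\beta$ for $c\notin P$ is $o_{ic}(\beta)=\min\{b_i,\, b_i u_i(c)/a_i\}$. I will show that whenever $x_{ic}>0$ and $c\notin P$, we have $u_i(c)\le a_i$, so that $o_{ic}(\beta)=b_i u_i(c)/a_i$. There are two cases.
\begin{itemize}
\item If $c\notin W$, then \eqref{eq:oneproject} gives $u_i(W)+u_i(c)\le\beta$. Hence $u_i(c)\le \beta-u_i(W)\le \beta-u_i(P)=a_i$, using $P\subseteq W$.
\item If $c\in W\setminus P$, then $u_i(c)\le u_i(W)-u_i(P)<\beta-u_i(P)=a_i$.
\end{itemize}
This is exactly where the one-project restriction in the axiom matches the budget cap in \eqref{eq:offers}.

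\emph{Step 2: rewrite the potential.} By the definition of $t_i$ and Step 1,
\[
\sum_{i:z_i>0}\frac{b_i t_i}{a_i}
=\sum_{c\notin P}\ \sum_{i:z_i>0} x_{ic}\,\frac{b_i u_i(c)}{a_i}
=\sum_{c\notin P}\ \sum_{i:z_i>0} x_{ic}\, o_{ic}(\beta).
\]
Terms with $x_{ic}=0$ contribute nothing, so the identity of offer and ratio is needed only where $x_{ic}>0$. Using $x_{ic}\le y_c$ from \eqref{eq:project}, and nonnegativity of all offers so that voters with $z_i=0$ may be added back, each inner sum is at most $y_c\sum_{i\in N}o_{ic}(\beta)$.

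\emph{Step 3: apply non-affordability.} Every $c\notin P$ satisfies $\sum_i o_{ic}(\beta)<\rho_c$. Hence the contribution of each project is at most $y_c\rho_c$, with strict inequality whenever $y_c>0$.

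\emph{Step 4: strictness.} It remains to show that some $c\notin P$ has $y_c>0$. As in the proof of \Cref{lem:levels}, any voter with $z_i>0$ (one exists since $\sum_i z_i>0$) has a positive assignment $x_{ic}>0$ to some $c\notin W$. Since $P\subseteq W$, this $c\notin P$, and $y_c\ge x_{ic}>0$ by \eqref{eq:project}. Therefore
\[
\sum_{i:z_i>0} \frac{b_i t_i}{a_i}<\sum_{c\notin P}\rho_c y_c,
\]
that is, $\Phi(P,b)<0$.

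The main obstacle is Step 1. One must check that the cap $\min\{b_i,\cdot\}$ never truncates an offer on a project that carries assignment. Otherwise the offers would undercount $b_i t_i/a_i$ and the bound would fail. The two-case check above handles this. The hypothesis that zero-cost projects lie in $P$ is not needed for this argument, beyond ensuring consistency with the setting of \Cref{lem:budget}.
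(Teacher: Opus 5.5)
Your proof is correct and follows the paper's argument: you use the same two-case check that the cap in \eqref{eq:offers} is inactive on positive assignments, then $x_{ic}\le y_c$ and non-affordability to bound the first sum strictly by $\sum_{c\notin P}\rho_c y_c$. The only minor difference is how you get strictness: you exhibit some $c\notin W$ with $y_c>0$ directly, as in \Cref{lem:levels}, whereas the paper uses $t_i\ge a_i z_i>0$ together with $\rho_c>0$. Your remark that the zero-cost hypothesis is redundant is also right, since a zero-cost project outside $P$ would be trivially affordable.
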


\begin{proof}
Every $c\notin P$ has $\rho_c>0$, since all zero-cost projects were selected initially. If $\sum_{c\notin P}\rho_c y_c=0$, nonnegativity would give $y_c=0$ for every remaining project. Then \eqref{eq:project} would give $x_{ic}=0$ on all of them, and hence $t_i=0$, contradicting $t_i\ge a_i z_i>0$ for any voter with $z_i>0$. Therefore $\sum_{c\notin P}\rho_c y_c>0$.

For every positive assignment $x_{ic}$ with $c\notin P$, the budget cap in \eqref{eq:offers} does not reduce the uncapped offer at level $\beta$. Indeed, if $c\notin W$, then \eqref{eq:oneproject} gives $u_i(c)\le\beta-u_i(W)\le a_i$. Equality is allowed: if $u_i(c)=a_i$, the uncapped offer is exactly $b_i$. If $c\in W\setminus P$, additivity gives $u_i(c)\le u_i(W)-u_i(P)<a_i$. Therefore, on each such assignment, $o_{ic}(\beta)=\frac{b_i u_i(c)}{a_i}$.
Using $x_{ic}\le y_c$, we obtain
\begin{equation}
 \sum_{i:z_i>0}\frac{b_i t_i}{a_i}
 =\sum_{\substack{c\notin P\\y_c>0}}y_c
      \sum_{i:z_i>0}\frac{x_{ic}}{y_c}\frac{b_i u_i(c)}{a_i} \le\sum_{c\notin P}y_c\sum_i o_{ic}(\beta)
 <\sum_{c\notin P}\rho_c y_c.
\label{eq:contradiction}
\end{equation}
The final inequality is strict because no unselected project is affordable by assumption, and the sum on the right is positive. Hence $\Phi(P,b)<0$.
\end{proof}

\begin{proof}[Proof of \Cref{thm:compute}]
Every selected project is fully paid, and $0\le p_i\le b_i$. Consequently, the total normalized cost of $P$ is at most $n$; hence $P$ is feasible.

Fix a feasible completion $W$ of the final selected set, and suppose that $W$ has a fractional violation. By \Cref{lem:levels}, choose its level $\beta$ from \eqref{eq:levels}, with weights and assignments satisfying the weight conditions in \Cref{def:plus} and \eqref{eq:assignments}. This is an integer between $1$ and $U$. Throughout the execution, the current selected set satisfies $P\subseteq W$, so the two lemmas apply to this fixed violation. By the affordability condition and \Cref{lem:budget}, $\Phi(P,b)\ge0$ initially and throughout all purchases at levels at least $\beta$. At the end of level $\beta$, all zero-cost projects are in $P$ and no unselected project is affordable. Thus, \Cref{lem:affordability} gives $\Phi(P,b)<0$, a contradiction.

It remains to implement the descending levels efficiently. At a fixed $P,b$, divide the remaining integer levels into intervals on which the set $\{i:u_i(P)<h\}$ is constant. There are at most $n+1$ intervals. Within each interval, every offer is nonincreasing in $h$, so affordability can be located by binary search. Intervals with no affordable project are skipped. Each purchase changes the current state, after which the search is repeated, with the current level as an upper bound. There are at most $m$ purchases. Appendix~\ref{app:runtime} gives the complete implementation. This proves the claimed polynomial running time.
\end{proof}

The guarantee is independent of the project order and of the choice of payments within the offers. The common order is only a way to specify a deterministic implementation. Our proof also does not ask a completion to follow the algorithm's offers: any additional feasible projects are allowed.

\section{Priceability at the Original Budget}
\label{sec:priceability}

The representation guarantee alone does not determine how project costs should be shared. We consider another property, \emph{priceability}, which adds an equal-budget interpretation of the outcome and guides our choice of additional projects.
Intuitively, priceability requires that selected projects be paid for from equal voter budgets, with payments restricted to projects that the voter values positively, and that no unselected project can collect more than its cost from its interested voters' remaining budgets. We use the standard PB definition \citep{PetersSkowron2020,PetersPierczynskiSkowron2021}.

\begin{definition}[Priceability]
\label{def:priceability}
A feasible outcome $W$ is \emph{priceable} if there exist $B'\ge B$ and payments $p_i(c)\ge0$ such that
\begin{enumerate}[(i)]
    \item $p_i(c)=0$ if $u_i(c)=0$ or $c\notin W$,
    \item $\sum_{c\in W}p_i(c)\le B'/n$ for every $i$,
    \item $\sum_i p_i(c)=\mathrm{cost}(c)$ for every $c\in W$,
    \item $\sum_{i:u_i(c)>0}(\frac{B'}n-\sum_{w\in W}p_i(w)) \le\mathrm{cost}(c)$ for every $c\notin W$.
\end{enumerate}
If these conditions hold with $B'=B$, we say that $W$ is priceable at the original budget.
\end{definition}

For approval utilities, we use the following cost-based PB form of the sub-core from \citet{Teh2026}, based on the sub-core introduced by \citet{MunagalaShenWang2022}. It excludes nonempty $S\subseteq N$ and $T\subseteq C$ such that
\begin{equation}
 \sum_{c\in T}\rho_c\le |S|,
 \quad A_i\cap W\subsetneq A_i\cap T\quad \text{for all }i\in S.
 \label{eq:subcore}
\end{equation}
Thus a group cannot use its proportional budget to retain all its members' approved selected projects and add an approved project for every member.

\begin{theorem}
\label{thm:priceability}
For additive utilities and arbitrary project costs, a feasible outcome satisfying FJR1+ and priceability at the original budget can be computed in polynomial time. For approval utilities, the same outcome satisfies the cost-based sub-core.
\end{theorem}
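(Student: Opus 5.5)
The plan is to run \Cref{alg:rbg} to obtain $P$ with its payments, and then complete $P$ by continuing a price-based rule that keeps using the same payment system. By \Cref{thm:compute}, every feasible completion of $P$ satisfies FJR1+, so whatever completion we construct automatically inherits the representation guarantee. The only real work is to choose a completion that is priceable at the original budget $B'=B$.

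After \Cref{alg:rbg}, each voter has remaining normalized budget $b_i\ge0$. Payments so far are made only for projects with $u_i(c)>0$: the offers in \eqref{eq:offers} vanish when $u_i(c)=0$, and zero-cost projects carry zero payments. Every selected project is fully paid, so conditions (i)--(iii) hold for $P$ in normalized units, and rescaling by $B/n$ gives actual costs.

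For the completion, run an additive-MES-style loop from the current budgets. While some unselected project $c$ satisfies $\sum_{i:u_i(c)>0}b_i\ge\rho_c$, we add one such project. Its cost is charged to the voters with $u_i(c)>0$, using any rule such as equal-price capped payments, with $p_i\le b_i$. We then update the budgets. Each step selects a project, so there are at most $m$ steps, each polynomial, and budgets stay nonnegative, so the total cost stays at most $n$ and feasibility is preserved. When the loop stops, every unselected $c$ has $\sum_{i:u_i(c)>0}b_i<\rho_c$, which is condition (iv) at $B'=B$ after rescaling. Conditions (i)--(iii) hold by construction. So the final $W\supseteq P$ is priceable at the original budget, feasible, and, by \Cref{thm:compute}, satisfies FJR1+. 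Polynomial time follows from \Cref{thm:compute} together with the bounded loop.

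For the sub-core claim under approval utilities, suppose $S,T$ satisfy \eqref{eq:subcore}. I would derive a contradiction with priceability directly, because this is the most delicate step. Each $i\in S$ approves some $c_i\in T\setminus W$. Priceability at $B$ says every unselected approved project's supporters have total leftover less than or equal to its cost. My first attempt is to sum condition (iv) over the projects of $T\setminus W$, weighting supporters appropriately, and compare this with $|S|\le$ the cost bound. For this I would combine it with FJR1+, which on approval ballots is FJR+ by \Cref{prop:relations}. Take the fractional violation with $z_i$ proportional to what each voter of $S$ has left relative to her spending on $A_i\cap W$, and $y_c=1$ on $T$, at integer levels $u_i(W)+1$.

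I expect the main obstacle to be exactly this interplay. Members of $S$ may have different levels $|A_i\cap W|+1$, and a single-level violation needs a common $\beta$. If mixing levels fails, I would instead follow \citet{Teh2026}'s approval argument. That argument uses the fact that the members of $S$ paid only for $A_i\cap W\subseteq T$. So if the payments of $S$ toward $T\cap W$ total at most $\sum_{c\in T\cap W}\rho_c$, then the leftover of $S$ is at least $\sum_{c\in T\setminus W}\rho_c$. That would contradict the strict form of (iv) summed over $T\setminus W$, and I would check carefully that the strictness survives the summation.
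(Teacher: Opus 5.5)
Your proposal is correct and follows the paper's proof: you run \Cref{alg:rbg}, complete it by repeatedly adding any project whose supporters' remaining budgets cover its normalized cost, obtain FJR1+ from \Cref{thm:compute}, and get condition (iv) from the strict termination inequality. For the sub-core, your fallback argument is exactly the paper's: the expenditure of $S$ is confined to $T\cap W$, and the strict termination inequality is summed over the nonempty set $T\setminus W$. The FJR1+-based route you try first is unnecessary, and the contradiction relies on the strictness from the loop's termination, not on the non-strict priceability condition alone.
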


\begin{proof}
The completion step below is the approval-PB continuation used by \citet{Teh2026}. It depends only on which projects each voter values positively, so it applies to the payments produced by additive RBG. The FJR1+ conclusion here follows from \Cref{thm:compute}, which applies to every feasible completion.

Run \Cref{alg:rbg}. Continue selecting any unselected project $c$ for which $\sum_{i:u_i(c)>0}b_i\ge\rho_c$, paying its normalized cost from those voters' remaining budgets. At most $m$ projects are selected, no budget becomes negative, and all payments are to positively valued projects. Let $W$ be the final set. It is a feasible completion and hence satisfies FJR1+ by \Cref{thm:compute}. At termination, every $c\notin W$ satisfies
\begin{equation}
 \sum_{i:u_i(c)>0}b_i<\rho_c.
 \label{eq:unspent}
\end{equation}
Multiplying all payments and remaining budgets by $B/n$ gives us \Cref{def:priceability} with $B'=B$. Fixed voter and project orders give a polynomial-time implementation.

Suppose, in the approval case, that $S,T$ satisfy \eqref{eq:subcore}. The selected projects paid for by members of $S$ all belong to $T$, so their total expenditure is at most $\sum_{c\in T\cap W}\rho_c$. Every member approves a project in $T\setminus W$. Summing \eqref{eq:unspent} over these unselected projects bounds the members' total remaining budget strictly below $\sum_{c\in T\setminus W}\rho_c$. Adding expenditure and remaining budget gives $|S|<\sum_{c\in T}\rho_c$, a contradiction.
\end{proof}

The original budget in this theorem is not increased for the representation guarantee or for the payments. The resulting outcome need not be exhaustive. Alternatively, one may complete the RBG outcome by repeatedly adding any budget-feasible project; \Cref{thm:compute} preserves FJR1+. These are different choices: exhaustiveness and priceability cannot in general be satisfied simultaneously with arbitrary costs \citep{PetersPierczynskiSkowron2021}. 

\section{Droop Representation and Priceability for Unit Costs}
\label{sec:droop}

We next ask whether the stronger representation requirement based on the strict Droop quota can also be combined with priceability. We consider unit-cost PB with $B=k$ and arbitrary nonnegative additive utilities.

The standard Droop quota refers to a fixed committee size $k$ \citep{CaseyElkind2026}; in the unit-cost model, the $+1$ in $k+1$ corresponds to one additional project. For arbitrary costs, the budget does not specify such a committee size. Simply replacing $|T|/(k+1)$ by $\mathrm{cost}(T)/(B+1)$ would make the requirement depend on the cost units: rescaling all costs and $B$ leaves $\rho_c$ and feasibility unchanged but need not preserve this ratio. Thus, the standard Droop condition is not directly defined for arbitrary-cost PB; an analogue would require a separate choice of definition.

Since $\rho_c=n/k$, the affordability condition for weak cohesiveness is $|S|\ge(n/k)|T|$. Following the strict Droop convention \citep{CaseyElkind2026}, a feasible outcome $W$ satisfies \emph{Droop-FJR1} if, for every nonempty group $S\subseteq N$, every $T\subseteq C$, and every $\beta>0$ satisfying $|S|>\frac{n}{k+1}|T|$, $u_i(T)\ge\beta$ for every $i \in S$, there is a voter $i\in S$ such that $u_i(W)\ge\beta$ or $u_i(W)+u_i(c)>\beta$ for some $c\in T\setminus W$. Only the affordability condition in \Cref{def:fjr1} changes. 

On approval utilities, this is equivalent to requiring some $i\in S$ with $|A_i\cap W|\ge\beta$ whenever the displayed conditions hold with $|A_i\cap T|\ge\beta$ and integer $\beta>0$, by the same integrality argument as for FJR1.

For the fractional version, keep the utility requirements of FJR1+ (\Cref{def:plus}) and replace its weight conditions by
\begin{equation}
 \sum_i z_i>0,
 \quad \sum_i z_i>\frac{n}{k+1}\sum_c y_c.
 \label{eq:droopweight}
\end{equation}
A feasible outcome with no such violation satisfies \emph{Droop-FJR1+}. The same choice of voter and project weights as in \Cref{prop:relations} shows that it implies Droop-FJR1. It also implies the Hare version: the positive project weight in any Hare violation makes the inequality in \eqref{eq:droopweight} strict. On approval utilities, it is exactly Droop-FJR+.

The project price $n/(k+1)$ comes from the descending-budget Droop rule of \citet{Ai2026}. \citet{Teh2026} proves Droop-FJR+ for approval RBG and combines its Hare version with sequential Phragm\'en to obtain FJR+ and priceability. We extend the representation argument to additive utilities through the utility-weighted offers and condition~\eqref{eq:oneproject}.

\begin{theorem}
\label{thm:droop}
For PB with additive utilities, unit costs, and budget $B=k$, an outcome of size $k$ satisfying Droop-FJR1+ can be computed in polynomial time. This property can also be verified in polynomial time. If at least $k$ projects are positively valued by at least one voter, the outcome can simultaneously be chosen to be priceable. In particular, for approval utilities there is a polynomial-time rule satisfying Droop-FJR+, priceability, and the sub-core under this condition.
\end{theorem}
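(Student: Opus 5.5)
Run \Cref{alg:rbg} unchanged except for the price used for affordability. In the unit-cost setting, a project is now bought once the offers in \eqref{eq:offers} total at least the Droop price $q=n/(k+1)$, rather than $\rho_c=n/k$, and payments are charged against that price. The analysis is that of \Cref{thm:compute}, with $\rho_c$ replaced by $q$ throughout.

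\emph{Feasibility and size.} Each purchase removes exactly $q$ from the total budget $n$. Hence at most $\lfloor n/q\rfloor$ purchases occur, and since purchases must strictly exceed $k$ budget units to reach $k+1$, the selected set $P$ has $|P|\le k$. To see that $k+1$ purchases are impossible, note that they would require total payment $(k+1)q=n$. This is the delicate case. I would argue that a purchase at level $h$ is made only if some voter still has positive budget, and that the final purchase would need every voter's entire budget offered. That can only happen when $u_i(c)\ge h-u_i(P)$ for all paying voters. I would try to show either that this never yields $k+1$ projects, or that any $k$-subset suffices. The safe fallback is to stop the algorithm after $k$ purchases; then the guarantee must be checked for the stopped run.

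\emph{Representation.} Fix a completion $W\supseteq P$ with $|W|=k$, and suppose it has a Droop violation \eqref{eq:droopweight}. \Cref{lem:levels} applies verbatim, since it does not involve weight conditions, so the level may be taken integral. Define $\Phi$ as in \eqref{eq:potential} with $\rho_c$ replaced by $q$. Then \Cref{lem:budget} gives $\Phi\ge\sum_i z_i-q\sum_c y_c>0$ initially, and $\Phi$ is nondecreasing through purchases at levels at least $\beta$. The identity \eqref{eq:identity} and the bound $\sum_i p_ix_{id}\le q y_d$ carry over unchanged. \Cref{lem:affordability} then gives $\Phi<0$ at the end of level $\beta$, a contradiction. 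Zero-cost projects do not arise here.

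If the run was truncated at $k$ purchases, then $W=P$ and no level-$\beta$ argument is needed unless truncation happened above level $\beta$. In that case the potential argument still applies up to the truncation point. At that point $\Phi\ge0$ but the budget remaining is $n-kq=q$, and I would derive the contradiction from $\sum_{i:z_i>0}b_it_i/a_i\le$ remaining budget times $\max_i t_i/a_i$. This is the step I expect to be the main obstacle, and I would first try to show truncation is unnecessary.

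\emph{Verification.} This is identical to \Cref{thm:verify}: for each level in \eqref{eq:levels}, solve the linear program minimizing $\sum_c y_c$ subject to \eqref{eq:normalize} and \eqref{eq:assignments}, and report a violation iff the optimum is $<(k+1)/n$.

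\emph{Completion and priceability.} When at least $k$ projects are positively valued, complete $P$ to size $k$ using sequential Phragm\'en on the positive-utility approval sets, starting from the loads induced by the RBG payments. Following the Hare completion of \citet{Teh2026}, I would show that the final loads give priceability with some $B'\ge B$. Since every completion is covered by the representation argument, Droop-FJR1+ is preserved. For approval ballots, \Cref{prop:relations} gives Droop-FJR+, and the sub-core follows from the unspent-budget argument in \Cref{thm:priceability} applied to the Phragm\'en prices.
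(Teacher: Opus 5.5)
\textbf{Gap: the truncated run.} Your overall plan is the paper's: run RBG at the Droop price $q=n/(k+1)$, use the potential $\Phi$, test with LPs at threshold $(k+1)/n$, and complete by sequential Phragm\'en from the RBG loads. The gap is in the representation argument, in exactly the case you flag.

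Truncation is not optional. Since $(k+1)q=n$, $k+1$ purchases can exhaust the budget exactly. Take $n=2$, $k=1$ (so $q=1$), with each voter giving utility one to her own private project. At level $1$ each voter offers her whole budget to that project, and both projects are bought. So $|P|\le k$ is false for the untruncated run, and trying to show truncation unnecessary would fail.

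You must therefore stop after $k$ purchases. The case where the $k$-th purchase occurs at a level $h\ge\beta$ is then unavoidable. This includes purchases during level $\beta$ itself, not only above it. \Cref{lem:affordability} cannot handle this case, because level $\beta$ never ends with no affordable project. It applies only when level $\beta$ finishes before $k$ purchases.

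\textbf{The missing step.} For this case you leave the contradiction open. The missing idea is a second use of \eqref{eq:oneproject}, beyond the budget cap:
\begin{itemize}
\item At truncation, $|P|=k=|W|$ and $P\subseteq W$ give $P=W$.
\item So every positive remaining assignment $x_{ic}$ is to some $c\notin W$, and \eqref{eq:oneproject} gives $u_i(c)\le\beta-u_i(W)=a_i$.
\item Together with $x_{ic}\le y_c$, this yields $t_i\le a_i\sum_{c\notin W}y_c$, i.e.\ $t_i/a_i\le\sum_{c\notin W}y_c$ for every voter with $z_i>0$. This is the bound on $\max_i t_i/a_i$ your sketch needs.
\item Since $\sum_i b_i=n-kq=q$, you get $\sum_{i:z_i>0}b_it_i/a_i\le q\sum_{c\notin W}y_c$, i.e.\ $\Phi\le 0$.
\end{itemize}
This contradicts $\Phi>0$. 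That strict inequality holds throughout because $\Phi$ starts strictly positive by \eqref{eq:droopweight} and never decreases at levels $\ge\beta$. Writing ``$\Phi\ge0$'' at the truncation point, as you did, would give no contradiction. Without the one-project restriction, $t_i/a_i$ could be arbitrarily large relative to $\sum_c y_c$, and this step would fail.

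\textbf{Priceability and sub-core.} These are only gestured at. The Droop-specific point is that the initial loads are at most $1/q=(k+1)/n$. With $M=\max\{(k+1)/n,\max_i L_i\}$, either $\sum_i(M-L_i)=nM-k=1$, or $M$ equals the last Phragm\'en threshold. One then takes $B'=nM\ge k+1$. The sub-core argument must accordingly use $M>k/n$, rather than $B'=B$ as in \Cref{thm:priceability}.
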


We give an intuition for the construction and the reason for the stronger quota. Run \Cref{alg:rbg} with price $n/(k+1)$ for every project, stopping purchases when $k$ projects have been selected. The invariant from \eqref{eq:potential} now starts strictly positive for a putative Droop violation. If the relevant level ends before $k$ projects have been selected, the previous affordability argument applies. If the algorithm reaches $k$ projects first, the total remaining voter budget is exactly $n/(k+1)$. The one-project condition then bounds the first sum of \eqref{eq:potential} by the second, again giving a contradiction. Thus every size-$k$ completion retains Droop-FJR1+.

For priceability, divide the payments by the project price and interpret them as unit project loads. Each voter's initial load is at most $(k+1)/n$. Complete by sequential Phragm\'en \citep{BrillEtAl2024Phragmen}, starting from the RBG loads as in \citet{Teh2026}. At the end, either all loads remain at most $(k+1)/n$, in which case the entire electorate has exactly one unit of unused capacity at that level, or the last Phragm\'en choice determines the largest load. In the first case no unselected project's voters have more than one unit of unused capacity; in the second, the same bound follows from Phragm\'en's choice. These loads give the required payments. Appendix~\ref{app:droop} supplies the full proof of our result

Unlike \Cref{thm:priceability}, the unit-cost completion uses the usual definition of priceability, allowing $B'>k$. The requirement that at least $k$ projects receive positive utility is necessary: an outcome of size $k$ otherwise contains a project for which every payment must be zero. Finally, this is a strict improvement over combining Hare-FJR+ with priceability, not an implication of that combination; \Cref{ex:droop} exhibits a priceable Hare-FJR+ outcome that violates Droop-FJR+.

\section{Discussion}
\label{sec:discussion}

The main conclusion from our work is that weakly cohesive groups can be represented efficiently with additive utilities while allowing a loss of less than one of their own projects. 
The same condition that permits this relaxation gives rise to a polynomial-time verifiable fractional axiom. The guarantee is exact on approval ballots and approaches exact FJR when individual projects are small relative to a group's utility requirement. The algorithm separates representation from the choice of additional projects, permitting either arbitrary feasible completions or priceable completions.

A natural direction for future work is to study analogous guarantees in the temporal PB setting, building on work on long-term PB \citep{LacknerMalyRey2021}. In fact, weaker versions of existing open questions in temporal PB are still open in temporal voting and sequential decision making \citep{bulteau2021jrperpetual,elkind2024temporalelections,ElkindObraztsovaTeh2024TemporalFairness,elkind2025temporalchores,chandak2026proportional,teh2026price}. For temporal voting with approval ballots and preferences known in advance, FJR outcomes always exist, but finding them in polynomial time remains open \citep{PhillipsEtAl2026Temporal}. Extending our fractional approach to this setting, as suggested for FJR+ by \citet{Teh2026}, would require accounting for the restriction to one alternative per round.

\subsection*{Declaration of generative AI use}

The authors used GPT-6 as a research assistant when exploring proof ideas and revising the
exposition. They checked the mathematical arguments and take full responsibility for the paper.

\bibliographystyle{plainnat}
\bibliography{bib}

\appendix

\section{Polynomial-Time Implementation}
\label{app:runtime}

If $U=0$, no fractional violation is possible by \eqref{eq:utility}, and the algorithm returns its initial zero-cost projects.

\paragraph{Skipping empty levels.}
Affordability need not be monotone over all levels: when $h$ decreases to $u_i(P)$, voter $i$ stops contributing. For example, with one voter, $\rho_c=b_1=1/2$, and $u_1(P)=u_1(c)=1$, project $c$ is affordable at $h=2$ but not at $h=1$.

At fixed $P,b$, let $H$ be the largest level still to be processed, initially $U$. Partition $[1,H]$ at the integers $u_i(P)+1$ that lie in it. On each resulting interval $[L,R]$, the set $\{i:u_i(P)<h\}$ is fixed, and every offer in \eqref{eq:offers} is nonincreasing in $h$. Thus, within that interval, affordability at $h$ implies affordability at every smaller level.

Inspect intervals in decreasing order. If no project is affordable at $L$, skip the interval. Otherwise binary-search for its largest affordable integer level $h$, select the first affordable project, and charge payments as in \Cref{alg:rbg}. Repeat at the updated $P,b$, setting $H=h$ so further purchases at that level are considered but larger levels are not revisited. If no interval contains an affordable project, terminate. This makes the same choices as the full descending execution.

There are at most $m$ purchases and one final search. Each search considers at most $n+1$ intervals and uses $O(n+\log(U+1))$ affordability tests: binary search is needed only in the first interval containing an affordable project. Each test evaluates at most $nm$ offers; sorting the endpoints takes $O(n\log(n+1))$ comparisons. This gives a polynomial-time implementation.

All arithmetic uses numbers of polynomial bit length. Multiplying utilities by the product of their input denominators makes them integers of polynomial bit length, so $\log(U+1)$ is polynomial in the input length. Initially, the normalized costs and remaining budgets have a common denominator of polynomial bit length. At each purchase, this denominator need only be multiplied by the positive integers $h-u_i(P)$ for the contributing voters. Each is at most $U$, so the at most $m$ purchases add $O(mn\log(U+1))$ denominator bits. The budgets lie in $[0,1]$, so their numerators also have polynomial bit length.

\section{Proof of the Theorem \ref{thm:droop}}
\label{app:droop}

Write $q=n/(k+1)$ and run \Cref{alg:rbg} with every project price equal to $q$, stopping purchases once $k$ projects have been selected. The utility scaling in \Cref{sec:algorithm} and the implementation in Appendix~\ref{app:runtime} apply without change. Let $W$ be any size-$k$ completion of the selected set.

\paragraph{Representation.}
Suppose that $W$ has a fractional Droop violation. \Cref{lem:levels} leaves the voter and project weights unchanged, so take its level $\beta$ to be an integer between $1$ and $U$. Use $a_i,t_i$ as defined in \Cref{sec:algorithm}, and consider
\[
 \Phi(P,b)=\sum_{i:z_i>0}\frac{b_i t_i}{a_i}
                 -q\sum_{c\notin P}y_c.
\]
By \eqref{eq:droopweight}, $\Phi(\varnothing,\mathbf{1})>0$. Identity~\eqref{eq:identity} proves that it cannot decrease during purchases at levels $h\ge\beta$.

If level $\beta$ ends before $k$ projects have been purchased, \Cref{lem:affordability}, with every project price replaced by $q$, contradicts $\Phi>0$. Otherwise the algorithm reaches $k$ projects before that level has ended. At that moment $P=W$ and $\sum_i b_i=n-kq=q$.
For each voter with $z_i>0$ and each positive remaining assignment, \eqref{eq:oneproject} gives $u_i(c)\le\beta-u_i(W)=a_i$. Therefore
\[
 t_i=\sum_{c\notin W}u_i(c)x_{ic}
 \le a_i\sum_{c\notin W}y_c.
\]
It follows that
\[
 \sum_{i:z_i>0}\frac{b_i t_i}{a_i}
 \le\left(\sum_{c\notin W}y_c\right)\sum_i b_i
 =q\sum_{c\notin W}y_c,
\]
again contradicting $\Phi>0$. Thus every size-$k$ completion satisfies Droop-FJR1+.

\paragraph{Verification and approval utilities.}
For each level $h$ in \eqref{eq:levels}, minimize $q\sum_c y_c$ over nonnegative variables subject to \eqref{eq:assignments} at $\beta=h$ and \eqref{eq:normalize}. A fractional Droop violation exists exactly when one of these programs is feasible and has optimum strictly less than one. As in \Cref{thm:verify}, this can be determined in polynomial time by linear programming.

For approval utilities, \eqref{eq:oneproject} is automatic at integer levels. Moreover, $h\sum_i z_i\le n\sum_c y_c$, together with \eqref{eq:droopweight}, implies $h<k+1$. Thus the relevant integer levels are $1,\dots,k$, and the axiom is precisely Droop-FJR+.

\paragraph{Completion by sequential Phragm\'en.}
We use the standard sequential-Phragm\'en load update \citep{BrillEtAl2024Phragmen}, starting from the RBG payments as in \citet{Teh2026}. Assume at least $k$ projects are positively valued. Divide each payment made by the algorithm by $q$. Denote the resulting loads by $p_i(c)$ and the total load of voter $i$ by $L_i=\sum_{c\in P}p_i(c)$. Each selected project has total load one, positively valued projects are the only ones with positive payments, and initially $0\le L_i\le\frac1q=\frac{k+1}{n}$.
The same $p_i(c)$ will be used as payments for unit-cost projects.

Until $k$ projects have been selected, for each unselected project $c$ with at least one voter assigning positive utility, let $\tau_c$ be the unique solution of
\begin{equation}
 \sum_{i:u_i(c)>0}\max\{0,\tau_c-L_i\}=1.
 \label{eq:phragmen}
\end{equation}
Select a project minimizing $\tau_c$, breaking ties by a fixed order. Give voter $i$ load $p_i(c)=\max\{0,\tau_c-L_i\}$ if $u_i(c)>0$, and zero otherwise, and increase $L_i$ accordingly. There is always an eligible project while fewer than $k$ have been selected. Each step assigns one unit of total load, and all loads are nondecreasing.

Let $W$ be the resulting outcome of size $k$, and set $M=\max \{\frac{k+1}{n},\ \max_i L_i \}$.
We prove that every unselected project satisfies
\begin{equation}
 \sum_{i:u_i(c)>0}(M-L_i)\le1.
 \label{eq:loadprice}
\end{equation}
If $M=(k+1)/n$, every summand $M-L_i$ is nonnegative and $\sum_i(M-L_i)=nM-k=1$.
The sum over any subset of voters is therefore at most one.

Otherwise $M> (k+1)/n$. This largest load was not present initially; it was created by the completion. Project thresholds in \eqref{eq:phragmen} cannot decrease when voter loads increase. Consequently, the successive selected thresholds are nondecreasing, and the last selected threshold is exactly $M$. Immediately before the last selection, every project that will remain unselected has threshold at least $M$. Equation~\eqref{eq:phragmen} then bounds its voters' total load increase at level $M$ by one. Their loads only increase after that selection, so their final sum in \eqref{eq:loadprice} is also at most one. A project with no positive utility has an empty sum and satisfies the inequality as well.

Now choose $B'=nM$. Each voter has budget $M$, each selected project receives payment one, and the payment restrictions hold. Inequality~\eqref{eq:loadprice} gives the condition for every unselected project. Since $B'\ge k+1>k$, these payments satisfy \Cref{def:priceability}. The completion preserves the selected set, so its representation guarantee follows from the first part of the proof.

\paragraph{The approval sub-core.}
Suppose a nonempty approval group $S$ and set $T$ satisfy $|T|\le k|S|/n$ and $A_i\cap W\subsetneq A_i\cap T$ for every member. These members spend only on projects in $T\cap W$. Also, each member approves a project in $T\setminus W$. Summing their expenditure and the remaining-budget inequalities \eqref{eq:loadprice} therefore gives $|S|M\le |T|$.
But $M\ge(k+1)/n$ gives $|S|M>k|S|/n\ge |T|$, a contradiction. In fact, the same argument excludes such strict-inclusion deviations whenever $|S|>n|T|/(k+1)$.

\paragraph{Running time of the completion.}
To compute $\tau_c$, sort the loads of voters with $u_i(c)>0$. On the interval where exactly the first $r$ sorted loads are below the threshold, \eqref{eq:phragmen} gives us $\tau_c=\frac{1+\sum_{j=1}^{r}L_{i_j}}{r}$.
Scanning the sorted loads finds the correct interval, including ties. This takes polynomial time for each project. There are at most $k$ completion steps, so the completion takes polynomial time.
Dividing the RBG payments by $q$ preserves polynomial bit length, and each completion step multiplies a common denominator of the loads by an integer at most $n$, so all arithmetic remains polynomial in the input length.

\section{Examples Clarifying the Guarantees}
\label{app:examples}

\begin{example}[EJR and priceability do not imply FJR1]
\label{ex:ejr}
Let $n=k=3$, with unit costs, and approval ballots
\[
 A_1=\{a,b,d_1\},\quad
 A_2=\{a,c,d_2\},\quad
 A_3=\{b,c,d_3\}.
\]
Consider $W=\{d_1,d_2,d_3\}$. Every voter has utility one. A singleton can afford at most one project; a pair has only one commonly approved project; and all three voters have no commonly approved project. Thus $W$ satisfies EJR, and hence its approval-equivalent one-project version. It is also priceable at budget three: each voter pays one for her selected project and has no money left.

However, $S=N$ can afford $T=\{a,b,c\}$, and every member has utility two from $T$. Adding any one project from $T$ to $W$ gives a member utility at most two, not strictly more. Therefore $W$ violates FJR1. This illustrates the importance of evaluating each member's total utility for the proposed set, rather than only projects that the group values in common.
\end{example}

\begin{example}[Why condition~\eqref{eq:oneproject} is needed]
\label{ex:necessary}
Consider the unit-cost instance from \citet[Footnote~8]{PetersPierczynskiSkowron2021}, with $n=2$, $B=2$, and projects $a,b,c$ of utilities
\[
\begin{array}{c|ccc}
 &a&b&c\\ \hline
 u_1&2&3&0\\
 u_2&2&0&3
\end{array}
\]
Suppose that the one-project restriction in \Cref{def:plus} is omitted, equivalently removing \eqref{eq:oneproject} from the assignment formulation. No feasible outcome then satisfies the resulting property. For $W=\{b,c\}$, use $\beta=4$, $z_1=z_2=1$, and $y_a=x_{1a}=x_{2a}=2$, with all other variables zero. The total voter weight and normalized cost both equal two, and all remaining constraints hold. For $W=\{a,b\}$, use $\beta=3$, $z_2=y_c=x_{2c}=1$, with all other variables zero; exchange the voters and $b,c$ for $W=\{a,c\}$. A smaller outcome is contained in one of these pairs, and the displayed violation for the pair remains valid when selected projects are removed. These cases exhaust the feasible outcomes.

Yet $\{b,c\}$ satisfies exact FJR: a singleton can afford at most one project and obtain utility at most three, while the full group cannot obtain minimum utility above three from any feasible set. The infeasibility above is therefore caused by the fractional requirements, not by FJR itself. Condition~\eqref{eq:oneproject} removes the displayed assignments because the missing project would take the assigned voter strictly above the target.

The same instance also distinguishes our guarantee from exact additive FJR. RBG selects $a$ at level four, charging each voter $1/2$, and then no further project is affordable. Its feasible completions satisfy FJR1+ by \Cref{thm:compute}, but none satisfies exact FJR: at least one voter has utility two and can afford her private project of utility three.
\end{example}

\begin{example}[FJR1+ is a strict strengthening]
\label{ex:strictplus}
Let $n=4$, $k=3$, with unit costs and ballots
\[
 A_1=\{a,d_1\},\quad A_2=\{a,d_2\},\quad
 A_3=\{a,d_3\},\quad A_4=\{d_1,d_2,d_3\}.
\]
The outcome $W=\{d_1,d_2,d_3\}$ satisfies FJR, and hence FJR1. A group containing voter four cannot complain, since that voter already has utility three and a proportionally affordable set contains at most three projects. A group drawn from the first three voters can afford at most two projects. It cannot give utility two to all its members: giving utility two to two distinct members needs $a$ and their two different private projects. A singleton cannot afford even one project. Targets at most one are already met.

Nevertheless, at level two, set $y_a=1$ and $x_{1a}=x_{2a}=x_{3a}=1$, with $z_1=z_2=z_3=1/2$, and all other variables zero. These assignments satisfy \eqref{eq:assignments}, while their voter weight $3/2$ is larger than their normalized cost $4/3$. Hence $W$ violates FJR1+. The example illustrates why the verification theorem is for a stronger requirement.
\end{example}

\begin{example}[Priceable Hare-FJR+ need not be Droop-FJR+]
\label{ex:droop}
Consider the following unit-cost PB instance with approval utilities, also used by \citet{Teh2026}. Let $n=4$, $k=2$, and
\[
 A_1=\{y,a,b\},\quad A_2=\{x,a,b\},\quad
 A_3=\{c,a\},\quad A_4=\{b\}.
\]
Take $W=\{x,y\}$. This outcome is priceable with $B'=4$: voter one pays one for $y$, voter two pays one for $x$, and the other voters do not pay. The remaining budget available to each of $a,b,c$ is one.

We verify Hare-FJR+ directly. At integer level one, only voters three and four can participate, and their approved sets are disjoint. Consequently,
\[
 \sum_i z_i\le y_a+y_b+y_c\le\sum_d y_d
 <2\sum_d y_d
\]
whenever the project weight is positive. At every integer level $h\ge2$, each project is approved by at most three voters, so
\[
 h\sum_i z_i=\sum_{i,d}x_{id}\le3\sum_d y_d,
 \quad
 \sum_i z_i\le\tfrac32\sum_d y_d<2\sum_d y_d.
\]
Neither case permits a Hare violation, whose normalized price is two.

For Droop, at level two use $y_a=1$, $x_{1a}=x_{2a}=x_{3a}=1$, and $z_1=z_2=z_3=1/2$, setting the other variables to zero. Their total voter weight is $3/2>4/3$, the Droop price of the one weighted project. This is a Droop-FJR+ violation. Thus the simultaneous conclusion of \Cref{thm:droop} is strictly stronger than Hare-FJR+ together with priceability.
\end{example}

\end{document}